\def\mkvc{\textsc{max $k$-vertex cover}\xspace}
\def\sol{\mathrm{sol}}
\def\opt{\mathrm{opt}}
\def\SOL{\mathrm{SOL}}
\def\prf{\begin{proof}}
\def\eprf{\end{proof}}
\def\@fnsymbol#1{\ensuremath{\ifcase#1\or (a)\or (b)\or (3)\or (4)\or \S \or * \or
   \mathsection\or \mathparagraph\or \|\or **\or \dagger \or \ddagger \or \dagger\dagger
   \or \ddagger\ddagger \else\@ctrerr\fi}}
\title{ 
A 0.821-ratio purely combinatorial algorithm for maximum $k$-vertex cover in bipartite graphs}
\author{Edouard Bonnet\inst{1,2} \and Bruno Escoffier\inst{4,5} \and Vangelis~Th.~Paschos\inst{1,2}\footnote{Institut Universitaire de France} \and Georgios Stamoulis\inst{1,2,3}}
\institute{PSL* Research University, Universit\'e Paris-Dauphine, LAMSADE \\
\email{\{edouard.bonnet,paschos\}@lamsade.dauphine.fr,georgios.stamoulis@dauphine.fr} \\
\and CNRS UMR 7243 \\
\and Universit\'{a} della svizzera Italiana \\
\and Sorbonne Universit\'es, UPMC Universite Paris 06 \\
\and UMR 7606, LIP6 \\
\email{bruno.escoffier@lip6.fr} \\
}
\begin{document}

\maketitle

\begin{abstract}

Our goal in this paper is to propose a \textit{combinatorial algorithm} that beats the only such algorithm known previously, the greedy one.
We study the polynomial approximation of \mkvc{} in bipartite graphs by a purely combinatorial algorithm
and present a computer assisted analysis of it, that finds the worst case approximation guarantee that is bounded below by~0.821.

\end{abstract}

\section{Introduction}\label{intro}

In the \mkvc{} problem, a graph~$G=(V,E)$ with $|V| = n$ and $|E| = m$ is given together with an integer $k \leqslant n$.
The goal is to find a subset $K \subseteq V$ with $k$ elements such that the total number of edges covered by~$K$ is maximized.
We say that an edge $e = \{u,v\}$ is covered by a subset of vertices~$K$ if $K \cap e \neq \emptyset$.
\mkvc{} is \textbf{NP}-hard in general graphs (as a generalization of \textsc{min vertex cover}) and it remains hard in bipartite graphs~\cite{apollonio14,DBLP:conf/ifipTCS/CaskurluMPS14}.

The approximation of \mkvc{} has been originally studied in~\cite{Hochbaum98}, where an approximation $1-\nicefrac{1}{e}$ was proved, achieved by the natural greedy algorithm. This ratio is tight even in bipartite graphs~\cite{DBLP:conf/compgeom/BadanidiyuruKL12}.
In~\cite{ageev}, using a sophisticated linear programming method, the approximation ratio for \mkvc{}
is improved up to~$\nicefrac{3}{4}$.
Finally, by an easy reduction from \textsc{Min Vertex Cover}, it can be shown that \mkvc{} can not admit a polynomial time approximation schema (PTAS), unless $\mathbf{P} = \mathbf{NP}$ \cite{DBLP:journals/cc/Patrank94}.

Obviously, the result of~\cite{ageev} immediately applies to the case of bipartite graphs.
Very recently,~\cite{DBLP:conf/ifipTCS/CaskurluMPS14} improves this ratio in bipartite graphs up to~$\nicefrac{8}{9}$, still using linear programming.

Finally, let us note that \mkvc{} is polynomial in regular bipartite graphs or in semi-regular ones, where the vertices of each color class have the same degree. Indeed, in both cases it suffices to chose~$k$ vertices in the color class of maximum degree.

\medskip
\noindent
\textbf{Our Contribution.} Our principal question motivating this paper is \textit{to what extent combinatorial methods for this problem compete with linear programming ones}. In other words, \textit{what is the ratios level, a purely combinatorial algorithm can guarantee?} In this purpose, we first devise a very simple algorithm that guarantees approximation ratio~$\nicefrac{2}{3}$, improving so the ratio of the greedy algorithm in bipartite graphs. Our main contribution consists of an approximation algorithm which computes six distinct solutions
and returns the best among them.

There is an obvious difficulty in analyzing the performance guarantee of such an algorithm.
Indeed it seems that there is no obvious way to compare different solutions and argue globally over them.
Another factor that contributes to this difficulty is that we provide analytic expressions for all the solutions produced, fact that involves a number of cases per each of them and a large number of variables (in all~48 variables are used for the several solution-expressions).
Similar situation was faced, for example, in~\cite{DBLP:journals/jal/FeigeKL02} where the authors gave a~$0.921$ approximation guarantee for \textsc{max cut} of maximal degree~3 (and an improved~$0.924$ for 3-regular graphs) by a computer assisted analysis of the quantities generated by theoretically analyzing a particular semi-definite relaxation of the problem at hand.
Similarly, by setting up a suitable non-linear program and solving it, we give a computer assisted analysis of a $0.821$-approximation guarantee for \mkvc{} in bipartite graphs. We give all the details of the implementation in Section~\ref{computeraided}.

\section{Preliminaries}\label{prelim}

The basic ideas of the algorithm we propose are the following: \\
\textbf{1.}~fix an optimal solution~$O$ (i.e., a vertex-set on~$k$ vertices covering a maximum number of edges in~$E$)  and guess the cardinalities~$k_1$ and~$k_2$ of its subsets~$O_1$ and~$O_2$ lying in the color-classes~$V_1$ and~$V_2$, respectively; \\
\textbf{2.}~compute the sets~$S_i$ of~$k_i$ vertices in~$V_i$, $i = 1,2$ that cover the most of edges; obviously~$S_i$ is a set of the~$k_i$ largest degree vertices in~$V_i$ (breaking ties arbitrarily); \\
\textbf{3.}~guess the cardinalities~$k'_i$ of the intersections $S_i \cap O_i$, $i = 1, 2$; \\
\textbf{4.}~compute the sets~$X_i$ of the $k_i - k'_i$ best vertices from~$V_i$ in graphs~$B[(V\setminus S_1), V_2]$ and~$B[V_1, (V_2\setminus S_2)]$, respectively; \\
\textbf{5.}~choose the best among six solutions built as described in Section~\ref{approxsection}.
%
%

Sets~$S_i$, $X_i$ and~$ O_i$ separate each color-class in~$6$ regions, namely, $S_i \cap O_i$, $S_i \setminus O_i$, $X_i \cap O_i$, $X_i \setminus O_i$, $O_i \setminus (S_i \cup X_i)$ (denoted by~$\bar{O}_i$, in what follows) and $V_i \setminus (S_i \cup X_i \cup O_i)$. So, there totally  exist~36 groups of edges (cuts) among them, the group $(V_1 \setminus (S_1 \cup X_1 \cup O_1), V_2 \setminus (S_2 \cup X_2 \cup O_2))$ being irrelevant as it will be hopefully understood in the sequel.
We will use the following notations to refer to the values of the 35 relevant cuts (illustrated in Figure~\ref{cuts}.):
\begin{description}
\item[$B$:]  the number of edges in the cut $(S_1 \setminus O_1, S_2 \cap O_2)$;
\item[$C$:]  the number of edges in the cut $(S_2 \setminus O_2, S_1 \cap O_1)$;
\item[$F_1,F_2,F_3$:] the number of edges in the cuts $(S_1 \setminus O_1, X_2 \setminus O_2)$, $(S_1 \setminus O_1, O_2 \setminus (X_2 \cup S_2))$ and $(S_1 \setminus O_1, O_2 \cap X_2)$, respectively;
\item[$H_1, H_2$:] the number of edges in the cuts $(S_1 \cap O_1, X_2 \setminus O_2)$ and $(S_1 \cap O_1, V_2 \setminus (S_2 \cup X_2 \cup O_2))$, respectively;
\item[$\{I_i\}_{i \in [6]}$:] the number of edges in the cuts $(X_1 \setminus O_1, X_2 \setminus O_2)$, $(X_1 \setminus O_1, V_2 \setminus (S_2 \cup X_2 \cup O_2))$, $(O_1 \setminus (S_1 \cup X_1), X_2 \setminus O_2)$, $(O_1 \setminus (S_1 \cup X_1), V_2 \setminus (S_2 \cup X_2 \cup O_2))$, $(X_1 \cap O_1, X_2 \setminus O_2)$ and $(X_1 \cap O_1, V_2 \setminus (S_2 \cup X_2 \cup O_2))$, respectively;
\item[$J_1,J_2,J_3$:] the number of edges in the cuts $({S_2} \setminus O_2, X_1 \setminus O_1)$, $({S_2} \setminus O_2, O_1 \setminus (S_1 \cup X_1))$ and
$({S_2} \setminus O_2, O_1 \cap X_1)$, respectively;
\item[$\{L_i\}_{i \in [9]}$:] the number of edges in the cuts $(S_1 \cap O_1, S_2 \cap O_2)$, $(S_1 \cap O_1, X_2 \cap O_2)$, $(S_1 \cap O_1, O_2 \setminus (S_2 \cup X_2))$, $(X_1 \cap O_1, S_2 \cap O_2)$, $(X_1 \cap O_1, X_2 \cap O_2)$, $(X_1 \cap O_1, O_2 \setminus (S_2 \cup X_2))$, $(O_1 \setminus (S_1 \cup X_1), S_2 \cap O_2)$, $(O_1 \setminus (S_1 \cup X_1), X_2 \cap O_2)$, and $(O_1 \setminus (S_1 \cup X_1), O_2 \setminus (S_2 \cup X_2))$, respectively;
\item[$N_1,N_2$:] the number of edges in the cuts $(S_2 \cap O_2, X_1 \setminus O_1)$ and $(S_2 \cap O_2, V_1 \setminus (S_1 \cup X_1 \cup O_1))$, respectively;
\item[$\{P_i\}_{i \in [5]}$:] the number of edges in the cuts $({X_2} \setminus O_2, V_1 \setminus (S_1 \cup X_1 \cup O_1))$, $({O_2} \setminus (S_2 \cup X_2), X_1 \setminus O_1)$, $({O_2} \setminus (S_2 \cup X_2), V_1 \setminus (S_1 \cup X_1 \cup O_1))$, $(X_2 \cap O_2, X_1 \setminus O_1)$, and $(X_2 \cap O_2, V_1 \setminus (S_1 \cup X_1 \cup O_1))$, respectively;
\item[$U_1, U_2, U_3$:] the number of edges is the cuts, $(S_1 \setminus O_1, S_2 \setminus O_2)$, $(S_1 \setminus O_1, V_2 \setminus (S_2 \cup X_2 \cup O_2))$ and $(S_2 \setminus O_2, V_1 \setminus (S_1 \cup X_1 \cup O_1))$, respectively.
\end{description}
\begin{figure}[h*]
\begin{center}
\includegraphics[scale=.55]{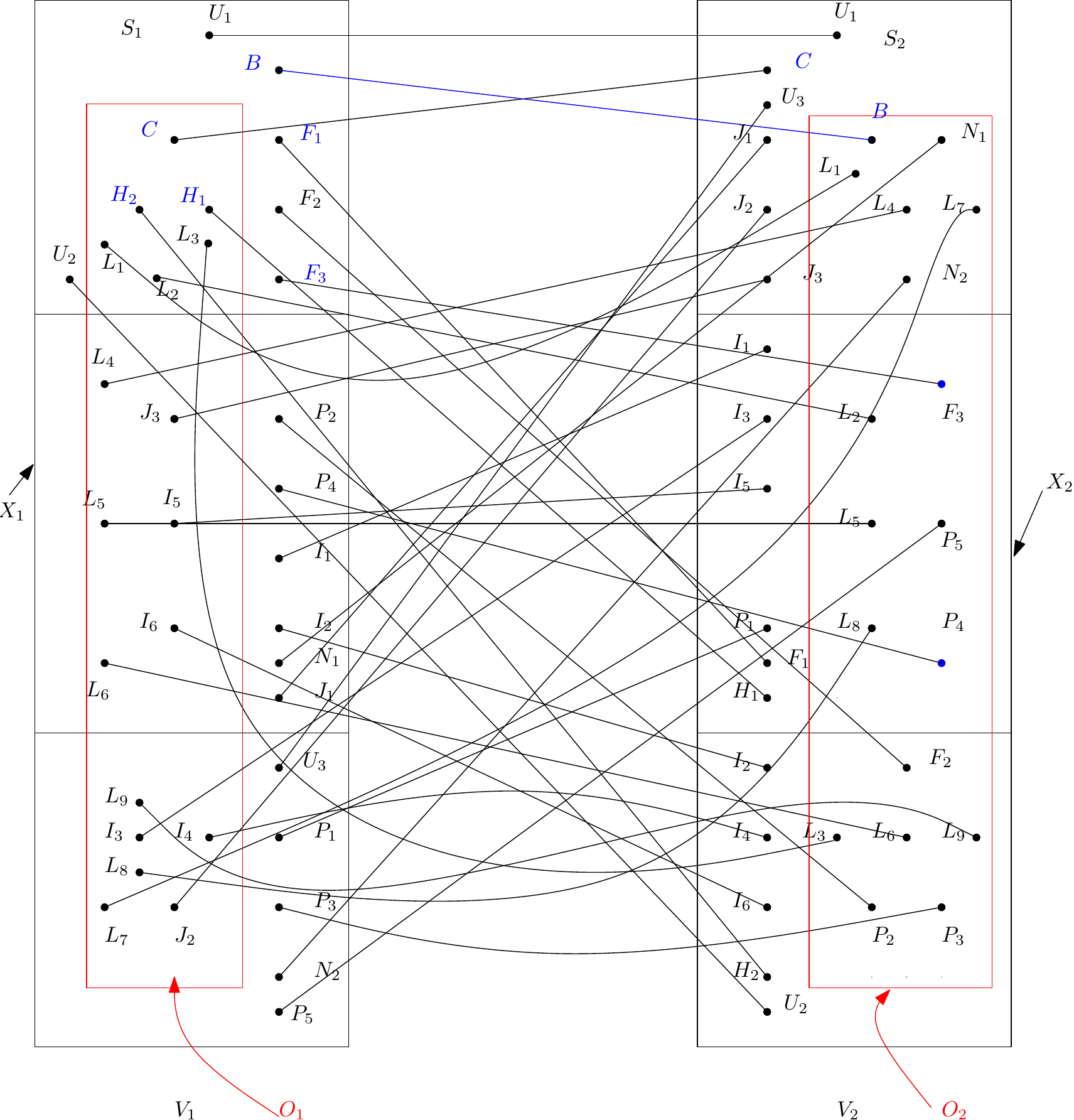}
\caption{Sets~$S_i$, $O_i$, $X_i$ $i= 1, 2$ and cuts between them.}\label{cuts}
\end{center}
\end{figure}
Based upon the notations above and denoting by~$\delta(V')$, $V' \subseteq V$, the number of edges covered by~$V'$ and by~$\opt(B)$ the value of an optimal solution (i.e., the number edges covered) for \mkvc{} in the input graph~$B$ , the following holds (see also Figure~\ref{cuts}):
\begin{eqnarray}
\delta\left(S_1\right) & = & B+C+F_1 + F_2 + F_3 + H_1 + H_2 + L_1 + L_2 + L_3 + U_1 + U_2 \label{s1} \\
\delta\left(S_2\right) & = & B+C+J_1 + J_2 + J_3 +L_1+L_4+L_7+ N_1 + N_2 + U_1 + U_3 \label{s2} \\
\delta\left(X_1\right) & = & I_1+I_2+I_5+I_6+J_1+J_3+ \sum_{i=4}^6L_i 
+N_1+P_2+P_4 \label{x1} \\
\delta\left(X_2\right) & = & F_1+F_3+H_1+I_1+I_3+I_5+L_2+L_5+L_8 
+P_1+P_4+P_5 \label{x2} \\ 
\delta\left(O_1\right) & = & C+H_1+H_2+I_3+I_4+I_5+I_6+J_2+J_3+ \sum_{i=1}^9L_i 
\label{o1} \\
\delta\left(O_2\right) & = & B+F_2+F_3+ \sum_{i=1}^9L_i 
+N_1+N_2+ \sum_{i=2}^5P_i\label{o2} \\ 
\opt(B) & = & 
B+C+ \sum_{i=2}^3F_i + \sum_{i=1}^2H_i + \sum_{i=3}^6I_i + \sum_{i=2}^3J_i + \sum_{i=1}^9L_i \nonumber \\
& & \mbox{} +\sum_{i=1}^2N_i  
+ \sum_{i=2}^5P_i \label{o} 
\end{eqnarray}
Without loss of generality, we assume $k_1 \leqslant k_2$ and we set:
$k_1 = \mu k_2$ ($\mu \leqslant 1$),
$k_1' = |S_1\cap O_1| = \nu k_1$ ($0 \leqslant \nu \leqslant 1$) and
$k_2' = |S_2\cap O_2| = \xi  k_2$ ($0 \leqslant \xi \leqslant 1$).
Let us note that, since~$k_i'$ vertices lie in the intersections $S_i \cap O_i$, the following hold for $\bar{O}_i = O_i \setminus (S_i \cup X_i)$, $i = 1,2$:
$|\bar{O}_1| = |O_1 \setminus (S_1 \cup X_1)| \leqslant (1-\nu)k_1 = \mu(1-\nu)k_2$ and $|\bar{O}_2| = |O_2 \setminus (S_2 \cup X_2)| \leqslant (1-\xi)k_2$.
%
From the definitions of the cuts and using~(\ref{s1}) to~(\ref{o2}) and the expressions for~$|\bar{O}_1|$ and~$|\bar{O}_2|$, simple average arguments and the assumptions for~$k_1$, $k_2$, $k_1'$ and~$k_2'$ just above, the following holds:
\begin{equation}\label{blockcons1}
\begin{array}{lcl}
\delta\left(S_1\right) &\geq& \delta\left(O_1\right) \\
\delta\left(S_2\right) &\geq& \delta\left(O_2\right) \\
\delta\left(X_1\right)+C+H_1+H_2+L_1+L_2+L_3 &\geq& \delta\left(O_1\right) \\
\delta\left(X_2\right) +B+N_1+N_2+L_1+L_4+L_7 &\geq& \delta\left(O_2\right) \\
\delta\left(S_1\right) &\geq& \nicefrac{1}{1-\nu}\cdot\delta\left(X_1\right) \\
\delta\left(S_2\right) &\geq& \nicefrac{1}{1-\xi}\cdot\delta\left(X_2\right) \\
\delta\left(S_1\right) + \delta\left(X_1\right) &\geq& \nicefrac{2-\nu}{1-\nu}\cdot\left(I_3+I_4+J_2+L_7+L_8+L_9\right) \\
\delta\left(S_2\right) + \delta\left(X_2\right) &\geq& \nicefrac{2-\xi}{1-\xi}\cdot\left(F_2+L_3+L_6+L_9+P_2+P_3\right) \\
B+F_1+F_2+F_3+U_1+U_2 &\geq& \delta\left(X_1\right) \\
C+J_1+J_2+J_3+U_1+U_3 &\geq& \delta\left(X_2\right)
\end{array}
\end{equation}
For $i= 1,2$, the two first inequalities in~(\ref{blockcons1}) hold because~$S_i$ is the set of~$k_i$ highest-degree vertices in~$V_i$; the third and fourth ones
because the lefthand side quantities are the number of edges covered by $X_i \cup (S_i\cap O_i)$; each of these sets has cardinality~$k_i$
and obviously covers more edges than~$O_i$; the fifth and sixth inequalities because the average degree of~$S_i$ is at least the average degree of~$X_i$ and $|X_1| = (1-\nu)k_1$ and $|X_2| = (1-\xi)k_2$; seventh and eighth ones because the average degree of vertices in $S_i \cup X_i$ is at least the average degree of vertices in $O_i \setminus (S_i \cup X_i)$; finally, for the last two inequalities the sum of degrees of the 
$k_i - k_i'$ vertices in~$S_i \setminus O_i$ is at least the sum of degrees of the $k_i - k_i'$ vertices of~$X_i$.

In Section~\ref{approxsection}, we specify the approximation algorithm sketched above. In Section~\ref{computeraided} a computer assisted analysis of its approximation-performance is presented. The non-linear program that we set up, not only computes the approximation ratio of our algorithm but it also provides an experimental study over families of graphs. Indeed, a particular configuration on the variables (i.e., a feasible value assignments on the variables that represent the set of edges $B,C,\dots$) corresponds to a particular family of bipartite graphs with similar structural properties (characterized by the number of edges belonging to the several cut considered). Given such a configuration, it is immediate to find the ratio of the algorithm, because we can simply substitute the values of the variables in the corresponding ratios and output the largest one. We can view our program as an \textit{experimental analysis} over all families of bipartite graphs, trying to find the particular family that implements the worst case for the approximation ratio of the algorithm. Our program not only finds such a configuration, but also provides data about the range of approximation factor on other families of bipartite graphs. Experimental results show that the approximation factor for the \textit{absolute majority} of the instances is very close to~1 i.e., $\geq 0.95$. Moreover, our program is \textit{independent} on the size of the instance. We just need a particular configuration on the relative value of the variables $B,C, \dots$, thus providing a compact way of representing families of bipartite graphs sharing common structural properties.

For the rest of the paper, we call ``best'' vertices a set of vertices that cover the most of \emph{uncovered} edges\footnote{For instance, saying ``we take~$S_1$ plus the~$k_2$ best vertices in~$V_2$, this means that we take~$S_1$ and then~$k_2$ vertices of highest degree in~$B[(V_1 \setminus S_1),V_2]$.} in~$B$.
Given a solution~$\SOL_k(B)$, we denote by~$\sol_k(B)$ its value. For the quantities implied in the ratios corresponding to these solutions, one can be referred to Figure~\ref{cuts}
and to expressions~(\ref{s1}) to~(\ref{o}).
Let us note that the algorithm above, since it runs for any value of~$k_1$ and $k_2$, it will run for $k_1 = k$ and $k_2 = k$. So, it is optimal for the instances of~\cite{DBLP:conf/compgeom/BadanidiyuruKL12}, where the greedy algorithm attains the ratio~$\nicefrac{(e-1)}{e}$.

Observe finally that, when $k \geqslant \min\{|V_1|,|V_2|\}$, then $\min\{|V_1|,|V_2|\}$ is an optimal solution since it covers the whole of~$E$. This remark will be useful for some solutions in the sequel, for example in the completion of solution~$\SOL_5(B)$.

\section{Some easy approximation results}

\subsection{A~$\mathbf{\nicefrac{2}{3}}$-approximation algorithm}\label{2/3}

The algorithm goes as follows: fix an optimal solution $O \subseteq V_1 \cup V_2$, guess~$k_1$ and~$k_2$, build the following three solutions and output the best among them:
\begin{itemize}
\item $\SOL_1$: take~$S_1$ plus the~$k_2$ remaining best vertices from~$V_2$;
\item $\SOL_2$: take~$S_2$ plus the~$k_1$ remaining best vertices from~$V_1$;
\item $\SOL_3$: take~$S_1$ plus~$S_2$.
\end{itemize}
$\SOL_1$ will cover more than $\delta(S_1) + \delta(O_2) - \delta(S_1,\bar{O}_2)$, where~$\bar{O}_2$ is $O_2 \setminus S_2$ and $\delta(S_1,\bar{O}_2)$ denotes the cardinality of the cut~$(S_1,\bar{O}_2)$. The fact that this solution covers more than $\delta(O_1)$ from the~$V_1$ side is obvious by the definition of~$S_1$. The~$k_2$ remaining best vertices from~$V_2$ will cover at least as many edges as $O \cap V_2$, except those that are already covered. This is precisely $\delta(O_2) - \delta(S_1, O_2)$ (we take something better than the ``surviving" part of~$O_2$).

With a complete analogy as for~$\SOL_1$, we have that~$\SOL_2$ will cover at least $\delta(S_2) + \delta(O_1) - \delta(S_2,\bar{O}_1) \geq \delta(O_2)  + \delta(O_1) - \delta(S_2,\bar{O}_1)$.

$\SOL_3$ will cover at least $\delta(S_1,O_2) \geq \delta(S1,\bar{O}_2)$ from~$V_1$. From~$S_2$ it will cover at least $\delta(S_2,\bar{O}_1) + \delta((S_2 \cap O_2),\bar{O}_1) \geqslant \delta(S_2,\bar{O}_1)$.

It is easy to see that $\sol_1(B) + \sol_2(B) + \sol_3(B) \geqslant 2(\delta(O_1) + \delta(O_2)) \geqslant 2\opt$, qed.

Let us note that that the algorithm above guarantees ratio~$\nicefrac{4}{5}$, when both $k'_i = 0$, $i = 1,2$~\cite{DBLP:journals/corr/BonnetEPS14}. Note also that, since it runs for any value of~$k_1$ and $k_2$, it will run for $k_1 = k$ and $k_2 = k$. So, it is optimal for the instances of~\cite{DBLP:conf/compgeom/BadanidiyuruKL12}, where the greedy algorithm attains the ratio~$\nicefrac{e-1}{e}$.

\subsection{The case $\nu = \xi =0$}\label{minor}

We present in this section a simple algorithm (Algorithm~\ref{alg2}) handling the case where $O_1 \cap S_1 = \emptyset$ and $O_2 \cap S_2 = \emptyset$ (notice that this case is not polynomially detectable). We show that in this case, a $\nicefrac{4}{5}$-approximation ratio can be achieved.

Consider the following algorithm:
\begin{enumerate}
\item {for} $i := 0$ {to}~$k$ {do}:
\begin{enumerate}
\item compute the set~$A_i$ (resp.,~$B'_i$) on~$i$ (resp., $k-i$) vertices of highest degrees in~$V_1$ (resp.,~$V_2$);
\item remove~$A_i$ (resp.~$B'_i$) from the graph, and compute the set~$A'_i$ (resp.,~$B_i$) on $k-i$ (resp.~$i$)  vertices of highest degrees in~$V_2$ (resp.,~$V_1$) in the surviving graph;
\item store the two solutions $(A_i\cup A'_i)$ and $(B_i\cup B'_i)$;
\end{enumerate}
\item  returnn {the best solution stored (denoted by~$\SOL(B)$).}
\end{enumerate}

We now prove that if $\nu = \xi =0$, then $\sol(B) \geqslant \nicefrac{4}{5}\cdot\opt(B)$.

Fix an optimal solution $O= O_1\cup O_2$ 
and consider the iteration of the algorithm with $i=k_1$. Set $A = A_i \cup A'_i$ and $B = B_i \cup B'_i$. Since the algorithm is symmetric, we can assume w.l.o.g. that $k_1\leqslant \nicefrac{k}{2}$. For some set $A \subseteq V$ denote by~$e(A)$ the number of edges covered by~$A$.

Once~$A_i$ has been taken, then the choice of~$A'_i$ is optimal among the possible sets of $k-i$ vertices in~$V_2$. Hence:
\begin{equation}\label{eqbr1}
\sol(B) \geqslant e(A) \geq e\left(A_i\cup O_2\right) =\delta\left(A_i\right)+\delta\left(O_2\right)-\delta\left(A_i,O_2\right)
\end{equation}
where~$\delta(A_i,O_2)$ denotes the set of edges having one endpoint in~$A_i$ and the other one in~$O_2$. Similarly,
\begin{equation}\label{eqbr2}
\sol(B) \geq e(A) \geq e\left(A_i\cup B'_i\right) \geq \delta\left(B'_i\right) + \delta\left(A_i,O_2\right)
\end{equation}
Now, consider the solution when $i=k$, i.e., when Algorithm~\ref{alg2} takes the set~$A_k$ of~$k$ best vertices in~$V_1$. Since $k_1 \leq \nicefrac{k}{2}$ and~$O_1$ and~$A_i$ are disjoint, it holds that:
\begin{equation}\label{eqbr3}
\sol(B) \geq e\left(A_k\right) \geq e\left(A_i\cup O_1\right)= \delta\left(A_i\right) + \delta\left(O_1\right)
\end{equation}
Now, sum up~(\ref{eqbr1}), (\ref{eqbr2}) and~(\ref{eqbr3}) with coefficients respectively~2, 2 and~1, respectively. Then:
$$
5\sol(B) \geq 4e(A) + e\left(A_k\right) \geq 3\delta\left(A_i\right) +\delta\left(O_1\right) +2\delta\left(B'_i\right) + 2\delta\left(O_2\right)
$$
Note that $\opt(B) \leq \delta(O_1)+\delta(O_2)$. The results follows since by the choice of~$A_i$ and~$B'_i$ we have $\delta(A_i) \geq \delta(O_1)$ and $\delta(B'_i) \geq \delta(O_2)$.

\section{A 0.821-approximation for the bipartite max $k$-ver\-tex co\-ver}\label{approxsection}

Consider the following algorithm for \mkvc{} (called \texttt{$k$-VC\_ALGO\-RI\-THM} in what follows) which guesses~$k_1$, $k_2$, $k_1'$ and~$k_2'$, builds several feasible solutions and, finally, returns the best among them.

Fix an optimal solution~$O$, guess the cardinalities~$k_1$ and~$k_2$ of~$O_1$ and~$O_2$ (swap these sets if necessary in order that $k_1 \leqslant k_2$), compute the sets~$S_i$ of~$k_i$ vertices in~$V_i$, $i = 1,2$, that cover the most of edges, 
guess the cardinalities~$k'_i$ of the intersections $S_i \cap O_i$, $i = 1, 2$, compute the sets~$X_i$ of~$k_i - k'_i$ best vertices in $V_i \setminus S_i$, $i = 1,2$ and
build the following \mkvc{}-solutions: \\
$\mathbf{SOL_1(B)}$ and~$\mathbf{SOL_2(B)}$,  take, respectively,~$S_1$ plus the~$k_2$ remaining best vertices from~$V_2$,  and~$S_2$ plus the~$k_1$ remaining best vertices from~$V_1$; \\
$\mathbf{SOL_3(B)}$ takes first $S_1 \cup X_1$ in the solution and  completes it with the $(1-\mu(1-\nu))k_2$ best vertices from~$V_2$; \\
$\mathbf{SOL_4(B)}$ takes~$S_2$ and completes it either with vertices from~$V_2$, or with vertices from both~$V_1$ and~$V_2$; \\
$\mathbf{SOL_5(B)}$ takes a $\pi$-fraction of the best vertices in~$S_1$ and~$X_1$, $\pi \in (0, \nicefrac{1}{2}]$; then, solution is completed with the $k_1 + k_2 - \pi(2k_1 - k_1')$
best vertices in~$V_2$; \\
$\mathbf{SOL_6(B)}$ takes a $\lambda$-fraction of the best vertices in~$S_2$ and~$X_2$, $\lambda \in (0, \nicefrac{(1+\mu)}{(2-\xi)}]$; then solution is completed with the $k_1 + k_2 - \lambda(2k_2 - k_2')$
best vertices in~$V_1$. 
Let us note that the values of~$\lambda$ and~$\pi$ are \textit{parameters that we can fix}.

In what follows, we analyze solutions~$\SOL_1(B) \ldots \SOL_6(B)$ computed by \texttt{$k$-VC\_ALGORITHM} and give analytical expressions for their ratios.

\subsection{Solution~$\mathbf{SOL_1(B)}$} \label{sol1}

The best $k_2$ vertices in~$V_2$, provided that~$S_1$ has already been chosen, cover at least the maximum of the following quantities:
$$
\begin{array}{llll}
\mathcal{A}_1 & = & J_1+J_2+J_3+L_4+L_7+N_1+N_2+U_3 & \text{by~$S_2$} \\
\mathcal{A}_2 & = & I_1+I_3+I_5+L_5+L_8+P_1+P_4+P_5 & \text{by~$X_2$} \\
\mathcal{A}_3 & = & L_4+L_5+L_6+L_7+L_8+L_9+N_1+N_2+P_2+P_3+P_4+P_5 & \text{by~$O_2$}
\end{array}
$$
So, the approximation ratio for~$\SOL_1(B)$ satisfies:
\begin{equation}\label{r1}
r_1 = \frac{\delta\left(S_1\right) + \max \Big\{ \mathcal{A}_1, \mathcal{A}_2, \mathcal{A}_3 \Big \}}{\opt(B)}
\end{equation}

\subsection{Solution~$\mathbf{SOL_2(B)}$} \label{sol2}
Analogously, the best $k_1$ vertices in~$V_1$, provided that~$S_2$ has already been chosen, cover at least the maximum of the following quantities:
$$
\begin{array}{llll}
\mathcal{B}_1 & = & H_1 +H_2+F_1+F_2+F_3+L_2+L_3+U_2 & \text{by~$S_1$} \\
\mathcal{B}_2 & = & I_1+I_2+I_5+I_6 + L_5+L_6+P_2+P_4 & \text{by~$X_1$} \\
\mathcal{B}_3 & = & H_1+H_2+I_3+I_4+I_5+I_6+L_2+L_3+L_5+L_6+L_8+L_9 & \text{by~$O_1$}
\end{array}
$$
So, the approximation ratio for~$\SOL_2(B)$ satisfies:
\begin{equation}\label{r2}
r_2 = \frac{\delta\left(S_2\right) + \max\Big\{ \mathcal{B}_1, \mathcal{B}_2, \mathcal{B}_3 \Big \} }{\opt(B)}
\end{equation}

\subsection{Solution~$\mathbf{SOL_3(B)}$}\label{sol3}
Taking first $S_1 \cup X_1$ in the solution, $k - (k_1 + k_1 - k_1') = k_1 + k_2 - 2k_1 + k_1' = k_2 - (k_1 - k_1') = (1-\mu(1-\nu))k_2$ vertices remain to be taken in~$V_2$. The best such vertices will cover at least the maximum of the following quantities:
\begin{align}
\mathcal{C}_1 = & (1-\mu(1-\nu))\left(J_2+N_2+L_7+U_3\right) \label{31}  \\
\mathcal{C}_2 = & \frac{1-\mu(1-\nu)}{2-\xi}\left(I_3 + J_2+L_7+L_8+N_2+P_1+P_5+U_3\right) \label{32} \\
\mathcal{C}_3 = & \frac{1-\mu(1-\nu)}{3-2\xi}\left(I_3+J_2+L_7+L_8+L_9+N_2+P_1+P_3+P_5+U_3\right) \label{33}
\end{align}
where~(\ref{31}) corresponds to a completion by the $(1-\mu(1-\nu))k_2$ best vertices of~$S_2$, (\ref{32}) corresponds to a completion by the $(1-\mu(1-\nu))k_2$ best vertices of $S_2 \cup X_2$, while~(\ref{33}) corresponds to a completion by the $(1-\mu(1-\nu))k_2$ best vertices of $S_2 \cup X_2 \cup \bar{O}_2$.
The denominator $3-2\xi$ in~(\ref{33})  is due to the fact that, using the expression for~$\bar{O}_2$, $|S_2 \cup X_2 \cup (O_2 \setminus (S_2 \cup X_2))| \leqslant (3-2\xi)k_2$. So, the approximation ratio for~$\SOL_3(B)$ is:
\begin{equation}\label{r3}
r_3 = \frac{\delta\left(S_1\right) + \delta\left(X_1\right) + \max \Big\{ \mathcal{C}_1, \mathcal{C}_2, \mathcal{C}_3 \Big\}}{\opt(B)}
\end{equation}

\subsection{Solution~$\mathbf{SOL_4(B)}$} \label{sol4}

Once $S_2$ taken in the solution, $k_1 = \mu k_2$ are still to be taken. Completion can be done in the following ways:
\begin{enumerate}
\item \textbf{if} $k_1 \leqslant k_2 - k_2'$, i.e., $\mu \leqslant 1 - \xi$, the best vertices taken for completion will cover at least either a~$\nicefrac{\mu}{1-\xi}$ fraction of edges incident to~$X_2$, or a~$\nicefrac{\mu}{2(1-\xi)}$ fraction of edges incident to $X_2 \cup \bar{O}_2$, i.e., at least~$\mathcal{M}_1$ edges, where~$\mathcal{M}_1$ is given by:
\begin{equation}\label{casemuleq1-xi}
\max\left\{\frac{\mu}{1-\xi}\delta\left(X_2\right), \frac{\mu}{2(1-\xi)}\left(\delta\left(X_2\right) + F_2+L_3+L_6+L_9+P_2+P_3\right)\right\}
\end{equation}
\item \textbf{else}, completion can be done by taking the whole set~$X_2$ and then the additional vertices taken:
\begin{enumerate}
\item either within the rest of~$V_2$ covering, in particular, a $\min\{1,\nicefrac{\mu-1+\xi}{|\bar{O}_2|}\} \geqslant \min\{1,\nicefrac{\mu-1+\xi}{1-\xi}\}$ fraction of edges incident to~$\bar{O}_2$ (quantity~$\mathcal{M}_2$ in~(\ref{casemugeq1-xi})),
\item or in~$S_1$ covering, in particular, a $\nicefrac{\mu-1+\xi}{\mu}$ fraction of uncovered edges incident to~$S_1$ (quantity~$\mathcal{M}_3$ in~(\ref{casemugeq1-xi})),
\item or in $S_1 \cup X_1$ covering, in particular, a $\nicefrac{\mu-1+\xi}{\mu(2-\nu)}$ fraction of uncovered edges incident to $S_1\cup X_1$ (quantity~$\mathcal{M}_4$ in~(\ref{casemugeq1-xi})),
\item or, finally, in $S_1 \cup X_1 \cup \bar{O}_1$ covering, in particular, a $\nicefrac{\mu-1+\xi}{\mu(3-2\nu)}$ fraction of uncovered edges incident to this vertex-set (quantity~$\mathcal{M}_5$ in~(\ref{casemugeq1-xi}));
\end{enumerate}
in any case such a completion will cover a number of edges that is at least the maximum of the following quantities:
\begin{equation}\label{casemugeq1-xi}
\begin{array}{rcl}
\mathcal{M}_2 & = & \min\left\{1, \frac{\mu-1+\xi}{1-\xi}\right\}\left(F_2 +L_3+L_6+L_9+ P_2 + P_3\right) \\ 
\mathcal{M}_3 & = & \frac{\mu-1+\xi}{\mu}\left(F_2 + H_2+L_3+U_2\right)  \\ 
\mathcal{M}_4 & = & \frac{\mu-1+\xi}{\mu(2-\nu)}\left(F_2+H_2+I_2+I_6+L_3+L_6+P_2+U_2\right)  \\ 
\mathcal{M}_5 & = & \frac{\mu-1+\xi}{\mu(3-2\nu)}\left(F_2+H_2+I_2+I_4+I_6+L_3+L_6+L_9+P_2+U_2\right) 
\end{array}
\end{equation}
\end{enumerate}
Using~(\ref{casemuleq1-xi}) and~(\ref{casemugeq1-xi}), the following holds for the approximation ratio of~$\SOL_4(B)$:
\begin{equation}\label{r4}
r_4 = \frac{\delta\left(S_2\right) + \left\{\begin{array}{ll}
\mathcal{M}_1 & \mu \leq 1-\xi \\
\delta\left(X_2\right) + \max\left\{\mathcal{M}_2, \mathcal{M}_3, \mathcal{M}_4, \mathcal{M}_5\right\} &  \mu \geq 1 - \xi
\end{array}\right.}{\opt(B)}
\end{equation}

\subsection{Vertical separations~-- solutions~$\mathbf{SOL_5(B)}$ and~$\mathbf{SOL_6(B)}$} \label{vertcuts}

For $i = 1,2$, given a vertex subset $V' \subseteq V_i$, we call \emph{vertical separation of~$V'$ with parameter~$c \in (0,\nicefrac{1}{2}]$}, a partition of~$V'$ into two subsets such that one of them contains a $c$-fraction of the best (highest degree) vertices of $V'$.  Then, the following easy claim holds for a vertical separation of $V' \cup V''$ with parameter~$c$.
\begin{claim}
Let $A(V')$ be a fraction~$c$ of the best  vertices in $V'$ and $A(V'')$ the same in $V''$. Then $\delta(A(V')) + \delta(A(V'')) \geq c \delta(V'\cup V'')$.
\end{claim}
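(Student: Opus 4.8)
The plan is to exploit the fact that, in our bipartite setting, $V'$ and $V''$ both live inside a single color class $V_i$, so the subgraph they induce is edgeless. Consequently, for any $K \subseteq V_i$ the quantity $\delta(K)$ counts exactly the edges whose (unique) endpoint in $V_i$ lies in $K$, which yields the linear identity $\delta(K) = \sum_{u \in K}\deg(u)$. This turns the statement, which superficially speaks about a union of covered-edge sets, into a purely additive inequality about sums of degrees, and is the observation that makes the claim ``easy''.

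First I would isolate a single-set averaging lemma: for any $W \subseteq V_i$, the $c$-fraction $A(W)$ of best vertices satisfies $\delta(A(W)) \geq c\,\delta(W)$. To see this, sort the vertices of $W$ by non-increasing degree as $d_1 \geq d_2 \geq \cdots \geq d_{|W|}$; then $A(W)$ is the prefix of length $c|W|$, and since any prefix of a non-increasing sequence has average at least the global average, $\sum_{j \leq c|W|} d_j \geq c|W|\cdot\frac{1}{|W|}\sum_{j} d_j = c\,\delta(W)$. Applying this to $W = V'$ and to $W = V''$ and adding gives $\delta(A(V')) + \delta(A(V'')) \geq c\big(\delta(V') + \delta(V'')\big)$.

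It then remains to pass from $\delta(V') + \delta(V'')$ to $\delta(V' \cup V'')$, which is immediate from subadditivity of edge coverage: each edge covered by $V' \cup V''$ is counted at least once in the sum $\delta(V') + \delta(V'')$, so $\delta(V') + \delta(V'') \geq \delta(V' \cup V'')$, and multiplying by $c \geq 0$ closes the argument. I do not expect a genuine obstacle here; the only points requiring care are the reduction to sums of degrees (which relies on $V'$ and $V''$ lying in the same color class, so that $\delta$ is linear) and the integrality of $c|W|$ — if $c|W|$ is not an integer one takes $\lceil c|W|\rceil$ best vertices, which only increases the left-hand side and hence preserves the inequality.
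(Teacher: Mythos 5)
Your proof is correct and follows essentially the same route as the paper's: the prefix-averaging argument showing $\delta(A(W)) \geq c\,\delta(W)$ for each of $V'$ and $V''$ is exactly the paper's "average degree of the $cn'$ best vertices is at least the average degree of $V'$" step. You additionally spell out the subadditivity step $\delta(V') + \delta(V'') \geq \delta(V' \cup V'')$ and the integrality of $c|W|$, both of which the paper leaves implicit; these are welcome clarifications but not a different argument.
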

\begin{proof}
Assume that in~$V'$ we have~$n'$ vertices. To form~$A(V')$ we take the $cn'$ vertices of~$V'$ with highest degree. The average degree of~$V'$ is~$\nicefrac{\delta(V')}{n'}$. The average degree of~$A(V')$ is~$\nicefrac{\delta(A(V'))}{cn'}$. But, from the selection of~$A(V')$ as the~$cn'$ vertices with highest degree, we have that $\nicefrac{\delta(A(V'))}{cn'} \geq \nicefrac{\delta(V')}{n'} \Rightarrow \delta(A(V')) \geq c \delta(V')$. Similarly for~$V''$, i.e., $\delta(A(V'')) \geq c \delta(V'')$.
\end{proof}
Solutions~$\SOL_5(B)$ and~$\SOL_6(B)$ are based upon vertical separations of $S_i \cup X_i$, $i= 1,2$, with parameters~$\pi$ and~$\lambda$, called $\pi$- and $\lambda$-vertical separations, respectively.

The idea behind vertical separation, is to handle the scenario when there is a ``tiny'' part of the solution (i.e. few in comparison to, let's say, $k_1$ vertices) that covers a large part of the solution and the ``completion'' of the solution done by the previous cases does not contribute more than a small fraction to the final solution. The vertical separation indeed tries to identify such a small part, and then continues the completion on the other side of the bipartition.

\subsubsection{Solution~$\mathbf{SOL_5(B)}$.}\label{sol5}

It consists of \emph{separating $S_1 \cup X_1$ with parameter $\pi \in (0, \nicefrac{1}{2}]$, of taking a~$\pi$ fraction of the best vertices of~$S_1$ and of~$X_1$ in the solution and of completing it with the adequate vertices from~$V_2$}. A $\pi$-vertical separation of $S_1 \cup X_1$ introduces in the solution $\pi\left(2k_1 - k_1'\right) = \pi(2-\nu)\mu k_2$  vertices of~$V_1$, which are to be completed with:
$$
k -  \pi(2-\nu)\mu k_2 = (1+\mu)k_2 -  \pi(2-\nu)\mu k_2 = (1-\mu(2\pi -1) + \mu\nu\pi)k_2
$$
vertices from~$V_2$. Observe that such a separation implies the cuts with corresponding cardinalities~$B$, $C$, $F_i$, $i = 1, 2, 3$, $H_1$, $H_2$, $I_1$, $I_2$, $I_5$, $I_6$, $J_1$, $J_3$, $L_j$, $j = 1, \ldots,6$, $N_1$, $P_2$, $P_4$, $U_1$ and~$U_2$. Let us group these cuts in the following way:
\begin{equation}\label{pigroups}
\begin{array}{lcl}
\Pi_1 & = & C+J_1+J_3+U_1 \\
\Pi_2 & = & B+L_1+L_4+N_1 \\
\Pi_3 & = & F_3+L_2+L_5+P_4 \\
\Pi_4 & = & I_1+I_5+F_1+H_1 \\
\Pi_5 & = & F_2+L_3+L_6+P_2 \\
\Pi_6 & = & I_2+I_6+H_2+U_2
\end{array}
\end{equation}
We may also notice that group~$\Pi_1$ refers to $S_2 \setminus O_2$,~$\Pi_2$ refers to~$S_2 \cap O_2$,~$\Pi_3$ to $X_2 \cap O_2$,~$\Pi_5$ to~$\bar{O}_2$ and~$\Pi_4$ to~$X_2 \setminus O_2$. Assume that a $\pi_i < 1$ fraction of each group~$\Pi_i$, $i= 1, \dots 6$ contributes in the~$\pi$ vertical separation of $S_1 \cup X_1$. Then, a $\pi$-vertical separation of $S_1 \cup X_1$ will contribute with a value:
\begin{equation}\label{firstpartpi}
\sum\limits_{i=1}^6\pi_i\Pi_i \geqslant \pi\sum\limits_{i=1}^6\Pi_i
\end{equation}
to~$\sol_5(B)$. We now distinguish two cases. \\
%
\textbf{Case 1:} $(1-\mu(2\pi -1) + \mu\nu\pi)k_2 \geqslant k_2$, i.e., $1-\mu(2\pi -1) + \mu\nu\pi \geqslant 1$. Then we have: \\
\textit{1.} $\mu (1- 2\pi) + \mu\nu\pi \leq 1-\xi$; then, the partial solution induced by the $\pi$-vertical separation will be completed in such a way that the contribution of the completion is at least equal to $\max\{Z_i, i = 1, \ldots, 5\}$, where: \\
$Z_1$ refers to~$S_2$ plus the best $(1-\mu(2\pi -1) + \mu\nu\pi)k_2 - k_2 = (\mu(1- 2\pi) + \mu\nu\pi)k_2$ vertices of~$O_2$ having a contribution of:
\begin{eqnarray}\label{z1}
Z_1 &=& \sum\limits_{i=1}^2\left(1-\pi_i\right)\Pi_i + \left(J_2 + L_7 + N_2 + U_3\right) + \frac{\mu(1-2\pi)+\mu\nu\pi}{1-\xi}\left[\left(1-\pi_3\right)\Pi_3 \right. \nonumber \\
& & \left. \mbox{} + \left(1-\pi_5\right)\Pi_5 + \left(L_8 + L_9 + P_3+P_5\right)\right]
\end{eqnarray}
$Z_2$ refers to~$S_2$ plus the best $(\mu(1- 2\pi) + \mu\nu\pi)k_2$ vertices of~$X_2$ having a contribution of:
\begin{align}\label{z2}
Z_2 =& \sum\limits_{i=1}^2\left(1-\pi_i\right)\Pi_i + \left(J_2 + L_7 + N_2 + U_3\right) \nonumber \\
 & \mbox{} + \frac{\mu(1-2\pi)+\mu\nu\pi}{1-\xi}\left[\sum\limits_{j=3}^4\left(1-\pi_i\right)\Pi_i +\left(I_3 + L_8  + P_1 +P_5\right)\right]
\end{align}
$Z_3$ and~$Z_4$ refer to the best $(1-\mu(2\pi -1) + \mu\nu\pi)k_2$ vertices of $S_2 \cup X_2$ and of $S_2 \cup O_2$ having, respectively, contributions:
\begin{eqnarray}
Z_3 &=& \frac{1-\mu(2\pi-1) + \mu\nu\pi}{2-\xi}\left[\sum\limits_{i=1}^4\left(1-\pi_i\right)\Pi_i \right. \nonumber \\
& & \mbox{} \left. + \left(I_3 + J_2 + L_7 + L_8 + N_2 + P_1 + P_5 + U_3\right)\right] \label{z3} \\
Z_4 &=& \frac{1-\mu(2\pi-1) + \mu\nu\pi}{2-\xi}\left[\sum\limits_{i=1}^3 \left(1-\pi_i\right)\Pi_i + \left(1-\pi_5\right)\Pi_5 \right. \nonumber \\
& & \mbox{} \left. + \left(J_2 + L_7 + L_8 + L_9 + N_2 + P_3 +P_5 +  U_3\right)\right] \label{z4}
\end{eqnarray}
$Z_5$ refers to the best $(1-\mu(2\pi -1) + \mu\nu\pi)k_2$ vertices of $S_2 \cup X_2 \cup \bar{O}_2$ having a contribution of:
\begin{eqnarray}\label{z5}
Z_5 &=& \frac{1-\mu(2\pi-1) + \mu \nu \pi}{3-2\xi}\left[\sum\limits_{i=1}^5\left(1-\pi_i\right)\Pi_i \right. \nonumber \\
& & \mbox{} \left. + \left(I_3 + J_2 + L_7 + L_8 + L_9 + N_2 + P_1 + P_3 + P_5 + U_3\right)\right]
\end{eqnarray}
\textit{2.} $\mu(1- 2\pi) + \mu\nu\pi \geq 1-\xi$; in this case, the partial solution induced by the $\pi$-vertical separation will be completed in such a way that the contribution of the completion is at least $\max\{\Theta_i, i = 1, \ldots, 3\}$, where: \\
$\Theta_1$ refers to $S_2 \cup X_2$ plus the best $(\mu(1- 2\pi) + \mu\nu\pi-(1-\xi))k_2$ vertices of~$\bar{O}_2$, all this having a contribution of:
\begin{eqnarray}\label{theta1}
\Theta_1 &=& \sum\limits_{i=1}^4\left(1-\pi_i\right)\Pi_i + \left(I_3 + J_2 + L_7 + L_8 + N_2+P_1+P_5+U_3\right) \nonumber \\
& & \mbox{} + \frac{\mu(1-2\pi)+\mu \nu \pi -(1-\xi)}{1-\xi}\left[\left(1-\pi_5\right)\Pi_5 + L_9 + P_3 \right]
\end{eqnarray}
$\Theta_2$ refers to $S_2 \cup O_2$ plus the best $(\mu(1- 2\pi) + \mu\nu\pi-(1-\xi))k_2$ vertices of $X_2 \setminus O_2$, all this having a contribution of:
\begin{align}\label{theta2}
\Theta_2 =& \sum\limits_{i=1}^3\left(1-\pi_i\right)\Pi_i \nonumber \\
 & \mbox{} + \left(1-\pi_5\right)\Pi_5 + \left(J_2 + L_7 + L_8 + L_9 + N_2+P_3+P_5+U_3\right) \nonumber \\
 & \mbox{} + \frac{\mu(1-2\pi)+\mu\nu\pi-(1-\xi)}{1-\xi}\left[\left(1-\pi_4\right)\Pi_4 + I_3+P_1\right]
\end{align}
$\Theta_3$ refers to the best  $(1-\mu(2\pi -1) + \mu\nu\pi)k_2$ vertices of $S_2 \cup X_2 \cup \bar{O}_2$ having a contribution of:
\begin{eqnarray}\label{theta3}
\Theta_3 &=& \frac{1-\mu(2\pi-1) + \mu \nu \pi}{3-2\xi}\left[\sum\limits_{i=1}^5\left(1-\pi_i\right)\Pi_i \right. \nonumber \\
& & \mbox{} \left. + \left(I_3 + J_2 + L_7 + L_8 + L_9 + N_2 + P_1 + P_3+P_5+U_3\right)\right]
\end{eqnarray}
\textbf{Case 2:} $1-\mu(2\pi -1) + \mu\nu\pi < 1$. The partial solution induced by the $\pi$-vertical separation will be completed in such a way that the contribution of the completion is at least equal to $\max\{\Phi_i, i = 1, \ldots, 5\}$, where: \\
$\Phi_1$ refers to the best $(1-\mu(2\pi -1) + \mu\nu\pi)k_2$ vertices in~$S_2$ with a contribution:
\begin{eqnarray}\label{phi1}
\Phi_1 &=& (1-\mu(2\pi-1) + \mu\nu\pi)\left[\sum\limits_{i=1}^2\left(1-\pi_i\right)\Pi_i + \left(J_2+L_7+N_2+U_3\right)\right]
\end{eqnarray}
$\Phi_2$ refers to the best $(1-\mu(2\pi -1) + \mu\nu\pi)k_2$ vertices in~$X_2$ with a contribution:
\begin{eqnarray}\label{phi2}
\Phi_2 &=& \frac{1-\mu(2\pi-1) + \mu\nu\pi}{1-\xi} \left[\sum\limits_{i=3}^4\left(1-\pi_i\right)\Pi_i + \left(I_3+L_8+P_1+P_5\right)\right]
\end{eqnarray}
$\Phi_3$ refers to the best $(1-\mu(2\pi -1) + \mu\nu\pi)k_2$ vertices in~$O_2$ with a contribution:
\begin{eqnarray}\label{phi3}
\Phi_3 &=& (1-\mu(2\pi-1) + \mu\nu\pi) \left[\sum\limits_{i=2}^3\left(1-\pi_i\right)\Pi_i + \left(1-\pi_5\right)\Pi_5 \right. \nonumber \\
& & \mbox{} \left. + \left(L_7+L_8+L_9+N_2+P_3+P_5\right)\right]
\end{eqnarray}
$\Phi_4$ refers to the best $(1-\mu(2\pi -1) + \mu\nu\pi)k_2$ vertices in~$S_2 \cup X_2$ with a contribution:
\begin{eqnarray}\label{phi4}
\Phi_4 &=& \frac{1-\mu(2\pi-1) + \mu\nu\pi}{2-\xi}\left[\sum\limits_{j=1}^4 \left(1-\pi_j\right)\Pi_j \right. \nonumber \\
& & \mbox{} \left. + \left(I_3+J_2+L_7+L_8+N_2+P_1+P_5+U_3\right)\right]
\end{eqnarray}
$\Phi_5$ refers to the best $(1-\mu(2\pi -1) + \mu\nu\pi)k_2$ vertices in~$S_2 \cup X_2 \cup \bar{O}_2$ with a contribution:
\begin{eqnarray}\label{phi5}
\Phi_5 &=& \frac{1-\mu(2\pi-1) + \mu\nu\pi}{3-2\xi}\left[\sum\limits_{j=1}^5 \left(1-\pi_j\right)\Pi_j \right. \nonumber \\
& & \mbox{} \left. + \left(I_3+J_2+L_7+L_8+L_9+N_2+P_1+P_3+P_5+U_3\right)\right]
\end{eqnarray}
Setting $Z^* = \max\{Z_i: i= 1, \ldots 5\}$, $\Theta^* = \max\{\Theta_i: i = 1,2,3\}$ and $\Phi^* = \max\{\Phi_i: i= 1, \ldots 5\}$, and putting~(\ref{pigroups}) and~(\ref{firstpartpi}) together with expressions~(\ref{z1}) to~(\ref{phi5}), we get for ratio~$r_5$:
\begin{equation}\label{r5}
\frac{\sum\limits_{i=1}^{6} \pi_i \Pi_i + \left\{\begin{array}{ll}
\left\{\begin{array}{ll}
Z^* & \text{if } \mu (1- 2\pi) + \mu \nu \pi \leq 1-\xi \\
\Theta^* & \text{if } \mu (1- 2\pi) + \mu \nu \pi \geq 1-\xi
\end{array}\right\} & \text{case: } 1-\mu(2\pi -1) + \mu \nu \pi \geq 1 \\
\Phi^* & \text{case: } 1-\mu(2\pi -1) + \mu \nu \pi < 1
\end{array}\right.}{\opt(B)}
\end{equation}

\subsubsection{Solution~$\mathbf{SOL_6(B)}$.}\label{sol6}

Symmetrically to~$\SOL_5(B)$, solution~$\SOL_6(B)$ consists of \emph{separating $S_2 \cup X_2$ with parameter $\lambda$, of taking a~$\lambda$ fraction of the best vertices of~$S_2$ and~$X_2$ in the solution and of completing it with the adequate vertices from~$V_1$}. Here, we need that:
$$
\lambda\left(k_2 + k_2 - k_2'\right) \leqslant k \Rightarrow \lambda(2-\xi)k_2 \leqslant (1+\mu)k_2 \Rightarrow \lambda \leqslant \frac{1+\mu}{2-\xi} \Rightarrow \lambda \in \left(\left. 0, \frac{1+\mu}{2-\xi} \right]\right.
$$
A $\lambda$-vertical separation of $S_2 \cup X_2$ introduces in the solution $\lambda(2-\xi)k_2$ vertices of~$V_2$, which are to be completed with:
$$
k -  \lambda(2-\xi)k_2 = (1+\mu)k_2 -  \lambda(2-\xi)k_2 = (1+\mu - \lambda(2-\xi))k_2
$$
vertices from~$V_1$.

Observe that such a separation implies the cuts with corresponding cardinalities~$B$, $C$, $F_1$, $F_3$, $H_1$, $I_1$, $I_3$, $I_5$, $J_i$, $i = 1, 2, 3$, $L_1$, $L_2$, $L_4$, $L_5$, $L_7$, $L_8$, $N_1$, $N_2$, $P_1$, $P_4$, $P_5$, $U_1$ and~$U_3$. We group these cuts in the following way:
\begin{equation}\label{lambdagroups}
\begin{array}{lcl}
\Lambda_1 & = & B+F_1+F_3+U_1 \\
\Lambda_2 & = & C+H_1+L_1+L_2 \\
\Lambda_3 & = & J_3+I_5+L_4+L_5 \\
\Lambda_4 & = & I_1+J_1+N_1+P_4 \\
\Lambda_5 & = & I_3+J_2+L_7+L_8 \\
\Lambda_6 & = & N_2+P_1+P_5+U_3
\end{array}
\end{equation}
Group~$\Lambda_1$ refers to $S_1 \setminus O_1$,~$\Lambda_2$ to~$S_1 \cap O_1$,~$\Lambda_3$ to $X_1 \cap O_1$,~$\Lambda_5$ to~$\bar{O}_1$ and~$\Lambda_4$ to~$X_1 \setminus O_1$. Assume, as previously, that a $\lambda_i < 1$ fraction of each group~$\Lambda_i$, $i= 1, \dots 6$ contributes in the~$\lambda$ vertical separation of $S_2 \cup X_2$. Then, a $\lambda$-vertical separation of $S_2 \cup X_2$ will contribute with a value:
\begin{equation}\label{firstpartlambda}
\sum\limits_{i=1}^6\lambda_i\Lambda_i \geqslant \lambda\sum\limits_{i=1}^6\Lambda_i
\end{equation}
to~$\sol_6(B)$. We again distinguish two cases.
\begin{enumerate}
\item $(1+\mu - \lambda(2-\xi))k_2 \geqslant \mu k_2$, i.e.,  $1+\mu - \lambda(2-\xi) \geqslant \mu$. Here we have the two following subcases:
\begin{enumerate}
\item $1-\lambda(2-\xi) \leq (1-\nu)\mu$; then, the partial solution induced by the $\lambda$-vertical separation will be completed in such a way that the contribution of the completion is at least equal to $\Upsilon^* = \max\{\Upsilon_i, i = 1, \ldots, 5\}$, where: \\
$\Upsilon_1$ refers to~$S_1$ plus the best $(1-\lambda(2-\xi))k_2$ vertices of~$X_1$ having a contribution of:
\begin{eqnarray}\label{upsilon1}
\Upsilon_1 &=& \sum\limits_{i=1}^2\left(1-\lambda_i\right)\Lambda_i + \left(H_2 + F_2 + L_3+U_2\right) \nonumber \\
& & \mbox{} + \frac{1-\lambda(2-\xi)}{\mu(1-\nu)}\left[\sum\limits_{i=3}^4\left(1-\lambda_i\right)\Lambda_i +\left(I_2 +I_6 + L_6 + P_2\right)\right]
\end{eqnarray}
$\Upsilon_2$ refers to~$S_1$ plus the best $(1-\lambda(2-\xi))k_2$ vertices of~$O_1$ having a contribution of:
\begin{eqnarray}\label{upsilon2}
\Upsilon_2 &=& \sum\limits_{i=1}^2\left(1-\lambda_i\right)\Lambda_i +\left (H_2 + F_2 + L_3+U_2\right) + \frac{1-\lambda(2-\xi)}{\mu(1-\nu)}\left[\left(1-\lambda_3\right)\Lambda_3 \right. \nonumber \\
& & \left. \mbox{} + \left(1-\lambda_5\right)\Lambda_5 + \left(I_4+I_6+L_6+L_9\right)\right]
\end{eqnarray}
$\Upsilon_3$ and~$\Upsilon_4$ refer to the best $(1+\mu - \lambda(2-\xi))k_2$  vertices of $S_1 \cup X_1$ and $S_1 \cup O_1$ having, respectively, contributions:
\begin{eqnarray}
\Upsilon_3 &=& \frac{\mu + 1 -\lambda(2-\xi)}{\mu(2-\nu)}\left[\sum\limits_{i=1}^4\left(1-\lambda_i\right)\Lambda_i \right. \nonumber \\
& & \left. \mbox{} + \left(F_2 + H_2 + I_2 + I_6 + L_3 + L_6 + P_2+U_2\right)\right] \label{upsilon3} \\
\Upsilon_4 &=& \frac{\mu + 1 -\lambda(2-\xi)}{\mu(2-\nu)}\left[\sum\limits_{i=1}^3\left(1-\lambda_i\right)\Lambda_i + \left(1-\lambda_5\right)\Lambda_5 \right. \nonumber \\
& & \left. \mbox{} + \left(F_2 + H_2 + I_4 + I_6 + L_3 + L_6 + L_9+U_2\right)\right] \label{upsilon4}
\end{eqnarray}
$\Upsilon_5$ refers to the best $(1+\mu - \lambda(2-\xi))k_2$ vertices of $S_1 \cup X_1 \cup \bar{O}_1$ having a contribution of:
\begin{eqnarray}\label{upsilon5}
\Upsilon_5 &=& \frac{\mu + 1 -\lambda(2-\xi)}{\mu(3-2\nu)}\left[\sum\limits_{j = 1}^5 \left(1-\lambda_j\right) \Lambda_j  \right. \nonumber \\
& & \left. \mbox{} + \left(F_2+H_2+I_2+I_4+I_6+ L_3 + L_6 + L_9 +P_2+U_2\right)\right]
\end{eqnarray}
\item $1-\lambda(2-\xi) \geq (1-\nu)\mu$; in this case, the partial solution induced by the $\lambda$-vertical separation will be completed in such a way that the contribution of the completion is at least $\Psi^* = \max\{\Psi_i, i = 1, \ldots, 3\}$, where: \\
$\Psi_1$ refers to $S_1 \cup X_1$ plus the best $(1 - \lambda(2-\xi) - (1-\nu))k_2$ vertices of~$\bar{O}_1$, all this having a contribution of:
\begin{eqnarray}\label{psi1}
\Psi_1 &=& \sum\limits_{j = 1}^4\left(1-\lambda_j\right) \Lambda_j + \left(F_2 + H_2 + I_2+I_6 + L_3 + L_6 + P_2+U_2\right)  \nonumber \\
 & & \mbox{} + \frac{1-\lambda(2-\xi) - \mu(1-\nu)}{\mu(1-\nu)}\left[\left(1-\lambda_5\right)\Lambda_5 +I_4+L_9\right]
\end{eqnarray}
$\Psi_2$ refers to $S_1 \cup O_1$ plus the best $(1 - \lambda(2-\xi) - (1-\nu))k_2$ vertices of $X_1 \setminus O_1$, all this having a contribution of:
\begin{eqnarray}\label{psi2}
\Psi_2 &=& \sum\limits_{j = 1}^3\left(1-\lambda_j\right) \Lambda_j + \left(1-\lambda_5\right)\Lambda_5 \nonumber \\
& & \mbox{} + \left(F_2+H_2+I_4+I_6+L_3+L_6+L_9+U_2\right) \nonumber \\
& & \mbox{} + \frac{1-\lambda(2-\xi) - \mu(1-\nu)}{\mu(1-\nu)}\left[\left(1-\lambda_4\right)\Lambda_4 +\left(I_2+P_2\right)\right]
\end{eqnarray}
$\Psi_3$ refers to the best $(\mu +1 - \lambda(2-\xi))k_2$ vertices of $S_1 \cup X_1 \cup \bar{O}_1$ having a contribution of:
\begin{eqnarray}\label{psi3}
\Psi_3 &=& \frac{\mu+1-\lambda(2-\xi)}{\mu(3-2\nu)}\left[\sum\limits_{j=1}^5\left(1-\lambda_j\right)\Lambda_j \right. \nonumber \\
 & & \left. \mbox{} + \left(F_2+H_2+I_2+I_4+I_6+L_3+L_6+L_9+P_2+U_2\right)\right]
\end{eqnarray}
\end{enumerate}
\item $1+\mu - \lambda(2-\xi) \leqslant \mu$. The partial solution induced by the $\lambda$-vertical separation will be completed in such a way that the contribution of the completion is at least equal to $\Omega^* = \max\{\Omega_i, i = 1, \ldots, 5\}$, where: \\
$\Omega_1$ refers to the best $(1+\mu - \lambda(2-\xi))k_2$ vertices in~$S_1$ with a contribution:
\begin{eqnarray}\label{omega1}
\Omega_1 &=& \frac{1+\mu-\lambda(2-\xi)}{\mu}\left[\sum\limits_{j=1}^2\left(1-\lambda_j\right)\Lambda_j + \left(F_2+H_2+L_3+U_2\right)\right]
\end{eqnarray}
$\Omega_2$ refers to the best $(1+\mu - \lambda(2-\xi))k_2$ vertices in~$X_1$ with a contribution:
\begin{eqnarray}\label{omega2}
\Omega_2 &=& \frac{1+\mu-\lambda(2-\xi)}{\mu}\left[\sum\limits_{j=3}^4\left(1-\lambda_j\right)\Lambda_j + \left(I_2+I_6+L_6+P_2\right)\right]
\end{eqnarray}
$\Omega_3$ refers to the best $(1+\mu - \lambda(2-\xi))k_2$ vertices in~$O_1$ with a contribution:
\begin{eqnarray}\label{omega3}
\Omega_3 &=& \frac{1+\mu-\lambda(2-\xi)}{\mu}\left[\sum\limits_{j=2}^3\left(1-\lambda_j\right)\Lambda_j + \left(1-\lambda_5\right)\Lambda_5 \right. \nonumber \\
 & & \left. \mbox{} +\left (H_2+I_4 + I_6+L_3+L_6+L_9\right)\right]
\end{eqnarray}
$\Omega_4$ refers to the best $(1+\mu - \lambda(2-\xi))k_2$ vertices in~$S_1 \cup X_1$ with a contribution:
\begin{eqnarray}\label{omega4}
\Omega_4 &=& \frac{1+\mu-\lambda(2-\xi)}{\mu(2-\nu)}\left[\sum\limits_{j=1}^4\left(1-\lambda_j\right)\Lambda_j \right. \nonumber \\
 & & \left. \mbox{} + \left(F_2+H_2+I_2+I_6+L_3+L_6+P_2+U_2\right)\right]
\end{eqnarray}
$\Omega_5$ refers to the best $(1+\mu - \lambda(2-\xi))k_2$ vertices in~$S_1 \cup X_1 \cup \bar{O}_1$ with a contribution:
\begin{eqnarray}\label{omega5}
\Omega_5 &=& \frac{1+\mu-\lambda(2-\xi)}{\mu(3-2\nu)}\left[\sum\limits_{j=1}^5\left(1-\lambda_j\right)\Lambda_j \right. \nonumber \\
 & & \left. \mbox{} + \left(F_2+H_2+I_2+I_4+I_6+L_3+L_6+L_9+P_2+U_2\right)\right]
\end{eqnarray}
\end{enumerate}
Putting~(\ref{lambdagroups}) and~(\ref{firstpartlambda})  together with expressions~(\ref{upsilon1}) to~(\ref{omega5}), we get:
\begin{equation}\label{r6}
r_{6} = \frac{\sum\limits_{i=1}^{6} \lambda_i \Lambda_i + \left\{\begin{array}{ll}
\left\{\begin{array}{ll}
\Upsilon^* & \text{if } 1-\lambda(2-\xi) \leq (1-\nu)\mu \\
\Psi^* & \text{if } 1-\lambda(2-\xi) > (1-\nu)\mu
\end{array}\right\} & \text{case: } \mu + 1 -\lambda(2-\xi) \geq \mu \\
\Omega^*  & \text{case: } \mu + 1 -\lambda(2-\xi) < \mu
\end{array}\right.}{\opt(B)}
\end{equation}



\section{Results} 

To analyze the performance guarantee of \texttt{$k$-VC\_ALGORITHM},  we set up a non-linear program and solved it to optimality. Here, we interpret the set of edges  $B,C, F_i, \dots$, as \textit{variables} , the expressions in~(\ref{blockcons1}) as \textit{constraints} and the \textit{objective function} is $\min r (\equiv \max_{j=1}^{6} r_j )$. In other words, we try to find a value assignments to the set of variables such that the maximum among all the six ratios defined is minimized. This value would give us the desired approximation guarantee of \texttt{$k$-VC\_ALGORITHM}.

Towards this goal, we set up a GRG (Generalized Reduced Gradient \cite{GRG1}) program. The reasons this method is selected are presented in Section~\ref{computeraided}, as well as a more detailed description of the implementation. GRG is a generalization of the classical \textit{Reduced Gradient} method~\cite{reduced_gradient} for solving (concave) quadratic problems so that it can handle higher degree polynomials and incorporate non-linear constraints.
Table~\ref{tablefinalres} in the following Section~\ref{computeraided} shows the results of the GRG program about the values of variables and quantities.  The values of ratios $r_1 \div r_6$ computed for them are the following: 
%
\begin{eqnarray*}
r_1 &=& 0.81806 \\
r_2 &=& 0.81797 \\
r_3 &=& 0.79280 \\
r_4 &=& 0.79657 \\
\mathbf{r_5} &=& \mathbf{0.82104} \\
r_6 &\mathbf{=}& 0.82103
\end{eqnarray*}
These results correspond to the cycle that outputs the \textit{minimum} value for the approximation factor and this is~0.821, given by solution $\SOL_5$.


\textit{Remark. As we note in Section~\ref{computeraided}, the~GRG solver does not guarantee the global optimal solution. The~0.821 guarantee is the minimum value that the solver returns after several runs from different initial starting points. However, successive re-executions of the algorithm, starting from this minimum value, were unable to find another point with smaller value. In each one of these successive re-runs, we tested the algorithm on~1000 random different starting points (which is greater than the estimation of the number of local minima) and the solver did not find value worse that the reported one.}

\section{A computer assisted analysis of the approximation ratio of \texttt{$k$-VC\_ALGORITHM}}\label{computeraided}

\subsection{Description of the method}

In this section we give details of the implementation of the solutions of the previous sections (as captured by the corresponding ratios) and we explain how these ratios guarantee a performance ratio of~$0.821$, i.e., that there is always a ratio among the ones described that is within a factor of 0.821 of the optimal solution value for the bipartite \mkvc{}.

Our strategy can be summarized as follows. We see the cardinalities of all cuts defined in Section~\ref{prelim} as \emph{variables}. These quantities represent how many edges go from one specific part of the bi-partition to any other given part of the other side of the bipartition. Counting these edges gives the value of the desired solution. By a proper scaling (i.e., by dividing every variable by the maximum among them) we guarantee that all these variables are in~$[0,1]$. Our goal is to find a particular configuration (which means a value assignment on the variables) such that the \emph{maximum} among all the different ratios that define the solutions of the previous section is as low as possible. This will give the performance guarantee.

This boils down to an optimization problem which can be, more formally, described as follows:
\begin{equation}\label{optprob}
\min r^* ~\mbox{ such that }~\max\limits_i \left\{ r_i \right\} \leqslant r^*
\end{equation}
Unfortunately, given the nature of the constraints captured by~(\ref{optprob}), this is \emph{not} a linear problem even though each variable appears as a monomial on the numerator and denominator of each constraint. This is because the numerators of~$r_3$~(\ref{r3}), $r_4$~(\ref{r4}), $r_5$~(\ref{r5}) and~$r_6$~(\ref{r6})
are polynomials of degree~3 or~4. Otherwise we could easily set up and solve to optimality this optimization problem, with our favorite linear solver.

To the best of our knowledge, there are no commercial solvers for solving polynomial optimization problems to find the \emph{global} optimal solution. All solvers for such polynomial systems stuck on local optima. The task then is to run the solver many times, with different starting points and different parameters, and to apply knowledge and intuition about the ``ballpark'' of the optimal solution value together with the respective configuration of the values of the variables, to be sure (given an error~$\epsilon$ unavoidable in such situations) that the optimal (or an almost optimal) solution of~(\ref{optprob}) is reached.

We note here that a promising although, as we will shortly argue, unsuccessful  approach would be to set up a \emph{Mathematica${}^{\circledR}$} program and would solve it exploiting the command \texttt{solve} which solve to optimality a system of polynomial equations using Gr\"{o}bner basis approach. Unfortunately, this is a solver that \emph{solves} a system of polynomial equations, and not an optimizer. In other words, given such a system as an input on the \texttt{solve} environment, this will either report that no feasible solution in the domain exists, or report a solution (value on the variables) that satisfy the system. Another, more serious, limitation is the following: we do not seek a configuration of the variable that satisfies all constraints (ratios). But we seek a configuration of minimum value such that there exists at least one constraint with value greater than the value of the configuration. In other words, if we look more carefully on the constraints, we see that these are of the form~$\min r^*$ s.t. $\exists r_i \geq r^*$. It is far from obvious how, and if, such a system could be set up on such solvers (in which some constraints might be ``violated'' i.e., be less than the target value of~$r^*$).

Another way to understand the above is to define the objective function value~$F$ of a given configuration (values)~$C$ for all the variables included. Given $C \in [0,1]^{X}$ where~$X$ is the set of variables, let~$r_i$ be the values of the ratios corresponding to the particular solutions. Then $F(C) = \max \{ r_i \}$. Our goal is to minimize this objective function value, i.e., to find a configuration on the variables such that~$F(C)$ is as small as possible. Observe that for a particular~$C$ it might very well be the case that all but one~$r_i$s are less than~$F(C)$. The objective value is given by the maximum value of all these ratios. This complexity of the objective function is precisely the reason why it is difficult to apply the \texttt{solve} environment. There are more complications that arise of technical nature (such as the use of conditions and cases), that will be discussed shortly.

\subsection{Selection of the optimizer}

So we have to settle with polynomial optimizers that may stuck on local optima and then, applying external knowledge and with the help of repetitive experiments, we try to reach a global optimal solution. For this reason we used two widely used polynomial (non-linear) solvers: The GRG (Generalized Reduced Gradient) solver  and the DEPS  (Differential Evolution and Particle Swarm Optimization) solver developed in SUN labs.

We will describe in more detail the GRG method and the technical details of the program we set up to achieve the $0.821$-approximation guarantee (The DEPS optimizer gave better results). The GRG method allows us to solve non-linear and even non-smooth problems. It has many different options that we exploit in our way to to find a global optimal solution. The GRG algorithm is the convex analog of the simplex method where we allow the constraints to be arbitrary nonlinear functions, and we also allow the variables to possibly have lower and upper bounds. It's general form is the following:
\begin{displaymath}
\begin{array}{cll}
\max\;\;  (\min ) & f\left(\vec{x}\right)  & \\
s.t. & h_i^T\left(\vec{x}\right)  = 0 & \forall i \in [m], \vec{L} \leq \vec{x} \leq \vec{U}
\end{array}
\end{displaymath}
where~$\vec{x}$ is the $n$-dimensional variable vector,~$h_i$ is the $i$-th constraint, and~$\vec{L}$, $\vec{U}$ are $n$-dimensional vectors representing lower and upper bounds of the variables. For simplicity we assume that~$\mathbf{h}$ is a matrix with~$m$ rows (the constraints) and~$n$ columns (variables) with rank~$m$ (i.e.,~$m$ linear independent constraints). The GRG method assumes that the set~$X$ of variables can be partitioned into two sets~$(\alpha, \beta)$ (let~$\vec{\alpha}$ and~$\vec{\beta}$ be the corresponding vectors) such that:
\begin{enumerate}
\item $\vec{\alpha}$ has dimension~$m$ and~$\vec{\beta}$ has dimension $n-m$;
\item the variables in~$\alpha$ strictly respect the given bounds represented by~$\vec{L}_{\alpha}$ and~$\vec{U}_{\alpha}$; in other words, $\forall x_i \in \alpha$, $\vec{L}_{x_i} \leq x_i \leq \vec{U}_{x_i}$.
\item $\nabla_{\alpha} h(\vec{x})$ is non-singular (invertible) at $X = (\alpha, \beta)$. From the Implicit Function Theorem, we know that for any given $\beta \subseteq X$, $\exists \alpha = X \setminus \beta$ such that $h(\vec{\alpha}, \vec{\beta}) = 0$. This immediately implies that $\nicefrac{\textrm{d}\vec{\alpha}}{\textrm{d} \vec{\beta}} = (\nabla_{\vec{\alpha}} h(\vec{x}))^{-1} \nabla_{\vec{\beta}} h(\vec{x})$.
\end{enumerate}
The main idea behind GRG  is to select the direction of the independent variables (which are the analog of the non-basic variables of the SIMPLEX method)~$\vec{\beta}$ to be the reduced gradient as follows:
\begin{displaymath}
\nabla_{\vec{\beta}} \left(f\left(\vec{x}\right) - y^T h\left(\vec{x}\right)\right),  \mbox{ where }  \vec{y} = \frac{\textrm{d} \vec{\alpha}}{\textrm{d}\vec{\beta}} = \left(\nabla_{\alpha} h\left(\vec{x}\right)\right)^{-1}\nabla_{\vec{\beta}} h\left(\vec{x}\right)
\end{displaymath}
Then, the step size is chosen and a correction procedure applied to return to the surface  $h(\vec{x})=0$. The intuition is fairly simple: if, for a given configuration of the values of the variables, a partial derivative has large absolute value, then the GRG would try to change the value of the variable appropriately and observe how its partial derivative changes. The goal is to arrive at a point where all partial derivatives are zero. This can happen to any local or global optimal point. In a few words, the GRG method is viewed as a sequence of steps through feasible points~$\vec{x}^j$ such that the final vector of this sequence satisfied the famous KKT conditions of optimality of non-linear systems.

In order to derive these conditions, we first take the Langrangean of the above problem:
\begin{displaymath}
\mathcal{L}\left(\vec{x}, \vec{\ell}\right) = f\left(\vec{x}\right) + \sum_{j \in [m]} \ell_j h^j\left(\vec{x}\right) - \sum_{i \in [n]} {L}_i \left(x_i - {L}_i\right) + \sum_{i \in [n]} {U}_i \left(x_i - {U}_i\right)
\end{displaymath}
At the optimum point~$\vec{x}^*$ the KKT conditions would yield that:
\begin{displaymath}
\nabla \mathcal{L} = \nabla f\left(\vec{x}^*\right) + \sum_{j \in [m]} \ell_j \nabla h^j\left(\vec{x}^*\right) - \left(\vec{L}-\vec{U}\right) = 0
\end{displaymath}
coupled with the standard constraints derived from the complementary slackness conditions. This is the stopping criterion of an iteration, meaning that we hit a local minimum.

As mentioned above, by setting the objective function value for a given configuration~$C$ on the variables~$X$ to be $F(C) = \max \{r_i \}$, our goal is to find a feasible~$C$ that minimizes~$F(C)$. An important thing here is to explain what we mean by ``feasible''. Typically, not every assignment of values to variables counts as feasible, because it might violate some obvious restrictions i.e., it might be the case that under a given assignment of values we have $\delta(S_1) \leq \delta(O_1)$ which is of course impossible (remember that~$S_1$ is the set of the~$k_1$ vertices of the \emph{highest} degree in~$V_1$ and so, by definition, they cover more edges than the vertices in the part of the optimum in~$V_1$). So, in order to complete our program, we couple it with  all the constraints from block~(\ref{blockcons1}):
$$
\begin{array}{cl}
\min & F(C) = \max\limits_{i=1}^{6}\left\{r_i\right\} \\
\mathrm{s.t.} & (\ref{blockcons1})
\end{array}
$$

\subsection{Implementation}

We set up a GRG program with the following details:
\begin{description}
\item[Variables.] We have one binary variable for each set of edges as depicted in Figure~\ref{cuts} plus $\pi_i, \lambda_i$, $i = 1, \ldots, 5$, plus $\mu, \nu, \xi$. Let~$X$ be this set of variables. We have $|X| = 48$.
\item[Parameters.] We note that in the $\pi$-fraction and in the $\lambda$-fraction of the solutions~$\SOL_5$, and~$\SOL_6$, the numbers~$\pi$ and~$\lambda$ are \emph{not} variables, but rather \textit{parameters} that we are free to choose. For the purpose of our experiments, we tried several different values for $\lambda, \pi$. 
In Table~\ref{table1}, we report results for various different choices of values for parameters~$\lambda$ and~$\pi$.
\item[Constraints.] Expression~(\ref{blockcons1}) in Section~\ref{prelim}.
\item[Further details.] In order to be certain about the optimality of the results, we employ a 2-step strategy. First, we apply a ``multistart" on the optimizer. Roughly speaking, the multistart works as follows. We provide a random seed to the optimizer, together with a parameter $X$, which is a positive integer. Then, we partition the feasible region of the variables (which is a subset of the $n$-dimensional hypercube $[0,1]^n$, $n = $ number of variables) into $X$ segments. The selection of $X$ feasible starting points inside the hypercube is done \textit{randomly}. We try to identify the local minimum in the neighborhood of each starting point. The output of the algorithm is the minimum among all these local minima. The intuition is simple: there might be several minima and by selecting randomly different starting points we significantly increase the chance to hit the global optimum. Typical size of $X$ in our experiments is 1000 (which is much greater than the number of different local optima in any case). In other words, after one "cycle" finish (hit of some local minimum) another running immediately starts from a different starting point chosen randomly (which is basically a feasible configuration of the variables).
%

We run the algorithm~100 independent times. Also, in each iteration, we start the first cycle at a different starting point by selecting a different random seed. The purpose of the random seed is to initiate the algorithm at a random point (feasible or not). This also means that the starting point of the other cycles would be also determined accordingly.
\item[Differencing method.] In order to numerically compute the partial derivative of a given configuration, we use the \emph{Central Differencing} method: in order to compute the derivative we use two different configurations on the variables, in the opposite direction of each other, as opposed to the method of forward differencing which uses a single point  that is slightly different from the current point to compute the derivative.
In more detail, in order to compute the first derivative at point $x_0 \in [0,1]^n$ we use the following (where $h$ is the precision, or the ``spacing": typical values of $h$ in our applications are $< 0,00001$):
\begin{displaymath}
\partial_c f\left(x_0\right) = f\left(x_0 + \frac{1}{2}h\right) - f\left(x_0 - \frac{1}{2}h\right)
\end{displaymath}
The central differencing method we used, although more time-consuming since it needs more calculations, is more accurate since, when~$f$ is twice differentiable, the term~$\partial_c f(x_0)$ divided by the precision~$h$, incurs an error of~${O}(h^2)$ as opposed to error~${O}(h)$ that we would have if we were using forward (or backward) differencing. Of course this comes at a cost of time consumption reflected by the more calculations needed to approximate the derivatives, but precision is more important than time in our application.
\end{description}


\subsection{Results}

In this section we report the results of the GRG program. First, we summarize the results according to the different values of parameters~$\pi$ and~$\lambda$. One can see that as these values decrease, the approximation guarantee increases. Also, for convenience, we include the approximation guarantee returned by including only the four first rations (excluding $\SOL_5, \SOL_6$ corresponding to the two vertical cuts on~$V_1$ and~$V_2$ respectively; first line in Table~\ref{table1}).

\begin{table}[th]
\begin{center}
\begin{tabular}{c|c|c}
Value of $\pi$  &   Value of $\lambda$  &   Ratio \\
\hline
\hline
-       &   -   & 0.723269 \\
0.4     &   0.4 & 0.754895 \\
0.2     &   0.00001 &   0.776595 \\
0.1     &   0.1 & 0.780161 \\
0.05    &   0.1 & 0.795602 \\
0.0001  &   0.5 & 0.807453 \\
0.0001  &   0.0001  &   0.805927 \\
0.00001 &   0.00001 &   0.821044 \\
\hline
\hline
\end{tabular}
\caption{Results according to the different values of parameters~$\pi$ and~$\lambda$.}\label{table1}
\end{center}
\end{table}
In Table~\ref{tablefinalres}, the final results with~$\pi = \lambda = 10^{-5}$ are given.

\begin{table}[th]
\begin{center}
\begin{tabular}{c|r||c|r||c|r||c|r}
Variables & Values & Groups & Values & $\pi, \lambda$ & Values & Ratios & Values \\
\hline
\hline
$B$     &   1      & $\delta(S_1)$ & 5.28490 & $\pi$    & 0.00001 & $r_1$ & 0.81806 \\
$C$     &   0.9944 & $\delta(S_2)$ & 5.90033 & $\pi_1$  & 0.08471 & $r_2$ & 0.81797 \\
$F1$    &	0.0002 & $\delta(X_1)$ & 2.78398 & $\pi_2$  & 0.13072 & $r_3$ & 0.79280 \\
$F2$    &	0.4954 & $\delta(X_2)$ & 3.09961 & $\pi_3$  & 0.97865 & $r_4$ & 0.79657 \\
$F3$	&   0.4457 & $\delta(O_1)$ & 5.26489 & $\pi_4$  & 0.19364 & $r_5$ & \textbf{0.82104} \\
$H1$    &	0.8449 & $\delta(O_2)$ & 5.88331 & $\pi_5$  & 0.38861 & $r_6$ & 0.82103 \\
$H2$    &	0.0623 & $\delta(OPT)$ & 10.5589 &          &  &  & \\
$I1$    &	0      &   &   & $\lambda$   &  0.00001   &  & \\
$I2$    &	0      &   &   & $\lambda_1$ &  0.14995   &  &  \\
$I3$    &	0.9986 &   &   & $\lambda_2$ &  0.76660   &  &  \\
$I4$    &	0      &   &   & $\lambda_3$ &  0.15362   &  & \\
$I5$    &	0.0577 &   &   & $\lambda_4$ &  1         &  &   \\
$I6$    &	0.3740 &   &   & $\lambda_5$ &  1         &  &  \\
$J1$    &	0.2386 &   &   &             &                & & \\
$J2$    &	0.9824 &   &   &             &                & & \\
$J3$    &	0.3612 &   &   &             &                & & \\
$N1$    &	1      &   &   &             &                & & \\
$N2$    &	0.6005 &   &   &             &                & & \\
$P1$    &	0      &   &   &             &                & & \\
$P2$    &	0      &   &   &             &                & & \\
$P3$    &	1      &   &   &             &                & & \\
$P4$    &	0.7525 &   &   &             &                & & \\
$P5$    &	0      &   &   &             &                & & \\
$L1$    &	0.1932 &   &   &           &                  & & \\
$L2$    &	0      &   &   &           &                  & & \\
$L3$    &	0.3960 &   &   &           &                  & & \\
$L4$    &	0      &   &   &           &                  & & \\
$L5$    &	0      &   &   &           &                  & & \\
$L6$    &	0      &   &   &           &                  & & \\
$L7$    &	0      &   &   &           &                  & & \\
$L8$    &	0      &   &   &           &                  & & \\
$L9$    &	0      &   &   &           &                  & & \\
$U1$    &	0.5330 &   &   &           &                  & & \\
$U2$    &	0.3198 &   &   &           &                  & & \\
$U3$    &	0      &   &   &           &                  & & \\
$\mu$   &	0.809  &   &   &           &                  & & \\
$\nu$   &	0           &   &   &           &                  & & \\
$\xi$   &	0           &   &   &           &                  & & \\
\hline
\hline
\end{tabular}
\caption{The final results with~$\pi = \lambda = 10^{-5}$.}\label{tablefinalres}
\end{center}
\end{table}

Let us conclude noticing that the non-linear program that we set up, not only computes the approximation ratio of \texttt{$k$-VC\_ALGORITHM} but it also provides an experimental study over families of graphs. Indeed, a particular configuration on the variables (i.e., a feasible value assignments on the variables that represent the set of edges $B,C,\dots$) corresponds to a particular family of bipartite graphs with similar structural properties (characterized by the number of edges belonging to the several cut considered). Given such a configuration, it is immediate to find the ratio of \texttt{$k$-VC\_ALGORITHM}, because we can simply substitute the values of the variables in the corresponding ratios and output the largest one. We can view our program as an \textit{experimental analysis} over all families of bipartite graphs, trying to find the particular family that implements the worst case for the approximation ratio of the algorithm. Our program not only finds such a configuration, but also provides data about the range of approximation factor on other families of bipartite graphs. Experimental results show that the approximation factor for the \textit{absolute majority} of the instances is very close to~1 i.e., $\geq 0.95$. Moreover, our program is \textit{independent} on the size of the instance. We just need a particular configuration on the relative value of the variables $B,C, \dots$, thus providing a compact way of representing families of bipartite graphs sharing common structural properties.

We run the program on a standard~$C++$ implementation of the GRG algorithm on a 64-bit Intel Core i7-3720QM@2.6GHz, with 16GB of RAM at 1600MHz running Windows 7 x64 \textit{and} Ubuntu 9.10~x32. 

\medskip
\noindent
\textbf{Acknowledgement.} The work of the author was supported by the Swiss National Science Foundation Early Post-Doc mobility grand P1TIP2\_152282.

\newpage


%


\begin{figure}[h*]
\begin{center}
\includegraphics[width=\textwidth,height=\textheight]{figure_new_example.pdf}
\caption{Sets~$S_i$, $O_i$, $X_i$ $i= 1, 2$ and cuts between them.}
\end{center}
\end{figure}

\end{document}